\documentclass[aps,pra,onecolumn,nofootinbib,superscriptaddress]{revtex4}
\usepackage{amsmath,amssymb,amsthm}
\usepackage{amsfonts}
\usepackage{amssymb}
\usepackage{graphicx}
\usepackage{comment}
\usepackage{color}
\usepackage[a4paper, total={6in, 8in}]{geometry}

\newtheorem{theorem}{Theorem}
\newtheorem{definition}[theorem]{Definition}
\newtheorem{corollary}[theorem]{Corollary}

\newtheorem{lemma}[theorem]{Lemma}
\usepackage[title]{appendix}

\begin{document}

\title{Mutually Unbiased Unitary Bases of Operators on $d$-dimensional Hilbert Space}

\author{Rinie N. M. Nasir}
\affiliation{Faculty of Science, International Islamic University Malaysia (IIUM),
Jalan Sultan Ahmad Shah, Bandar Indera Mahkota, 25200 Kuantan, Pahang, Malaysia}
\author{Jesni Shamsul Shaari}
\affiliation{Faculty of Science, International Islamic University Malaysia (IIUM),
Jalan Sultan Ahmad Shah, Bandar Indera Mahkota, 25200 Kuantan, Pahang, Malaysia}
\affiliation{ Institute of Mathematical Research (INSPEM), University Putra Malaysia, 43400 UPM Serdang, Selangor, Malaysia.}
\author{Stefano Mancini}
\affiliation{School of Science \& Technology, University of Camerino, I-62032 Camerino, Italy}
\affiliation{ INFN Sezione di Perugia, I-06123 Perugia, Italy}

\date{\today}
\begin{abstract}
Analogous to the notion of mutually unbiased bases for Hilbert spaces, we consider mutually unbiased unitary bases (MUUB) for the space of operators, $M(d,\mathbb{C})$, acting on such Hilbert spaces. The notion of MUUB reflects the equiprobable guesses of unitary operators in one basis of $M(d,\mathbb{C})$ when estimating a unitary operator in another. Though, for prime dimension $d$, the maximal number of MUUBs is known to be $d^2-1$, there is no known recipe for constructing them, assuming they exist. However, one can always construct a minimum of three MUUBs, and the maximal number is approached for very large values of $d$. MUUBs can also exist for some $d$-dimensional subspace of $M(d,\mathbb{C})$ with the maximal number being $d$. 

\end{abstract}
\keywords{Quantum information; Mutually unbiased bases (MUBs); Mutually unbiased unitary bases (MUUBs)}

\maketitle



\section{Introduction}	
\par In the context of quantum information theory, one's ability to know or manipulate a system is generally limited. The uncertainty principle as an example, limits the ability to precisely estimate values associated to non-commuting observables. Considering quantum systems that can be represented by elements in a finite dimensional Hilbert space, measurements made in one basis may perturb the system and effectively result in introducing uncertainty of measurements made in another. In the extreme scenario, where a state prepared in one  basis gives maximum uncertainty when measured in another, is captured in the notion of mutually unbiased bases (MUBs). More precisely, measuring a quantum state belonging to a basis along a mutually unbiased basis, one obtains as the result, a random vector of the latter basis and all the possible results are equiprobable \cite{Durt}. As a quick example, consider in a 2-dimensional Hilbert space, a state from the basis, $\{(|0 \rangle +|1 \rangle)\sqrt{2},(|0 \rangle -|1 \rangle)\sqrt{2}\}$, with $|0 \rangle$ and $|1 \rangle $ elements of the computational basis; measurements in this latter basis would result in either the states $|0 \rangle$ or $|1 \rangle$ with probability $1/2$ each. The following is the standard definition for MUB 
\begin{definition}
Two distinct orthonormal bases for a $d$-dimensional Hilbert space, ${ B^{(0)}}=\{ \vert \varphi_0 \rangle,...,\vert \varphi_{d-1} \rangle \}$ and ${ B^{(1)}}=\{ \vert \phi_0 \rangle,...,\vert \phi_{d-1} \rangle \}$ are said to be mutually unbiased bases (MUB) provided that $\vert \langle \varphi_i \vert \phi_j \rangle\vert=1/\sqrt{d}$, for every $i, j=0,...,{d-1}$.
\end{definition}
\noindent MUBs have proven to be useful in practical applications such as quantum key distribution (QKD)\cite{Gisin} and quantum state tomography\cite{Fields}.

\par We consider an analogous idea of equiprobable guesses to bases consisting of unitary operators for some subspace of $M (d,\mathbb{C})$, namely mutually unbiased unitary bases (MUUB)\cite{Scott,Jesni}.\footnote{Although $M (d,\mathbb{C})$ is the set of $d \times d$ matrices with complex entries, it is regarded as the set of operators acting on a $d$-dimensional Hilbert space (with $d$ prime) because actually matrices represent such operators.  } In short, these are the bases such that the Hilbert Schmidt inner product for elements from different bases would have a common value. The idea of equiprobable guesses can be understood as follows \cite{Vidal}; let $\mathcal{V}_{\mathcal{R}}$ be a guess for a given unknown unitary operator, $\mathcal{V} \in SU(d)$ randomly selected based on a probability distribution uniform with respect to the Haar measure. How closely $\mathcal{V}_{\mathcal{R}}$ resembles $\mathcal{V}$ (when transforming any state $| \upsilon \rangle \in \mathcal{H}_d$) is defined by a function, $G(\mathcal{V}_{\mathcal{R}},\mathcal{V})$  given by  
\begin{equation}
G(\mathcal{V}_{\mathcal{R}},\mathcal{V})=\left| \int \langle \upsilon |\mathcal{V}_{\mathcal{R}}^{\dagger} \mathcal{V} | \upsilon \rangle d \upsilon\right|^2=\frac{1}{d^2} |\text{Tr} (\mathcal{V} \mathcal{V}_{\mathcal{R}}^{\dagger})|^2.
\end{equation}
\noindent
Note that this represents the average over all pure states $| \upsilon \rangle$. Hence, two distinct guesses for $\mathcal{V}$, say, $\mathcal{V}_{\mathcal{R}}$ and $\mathcal{V}_{\mathcal{S}}$ are equiprobable if $|\text{Tr} (\mathcal{V} \mathcal{V}_{\mathcal{R}}^{\dagger})|^2=|\text{Tr} (\mathcal{V} \mathcal{V}_{\mathcal{S}}^{\dagger})|^2$. The notion of MUUB (for the entire space of $M (d,\mathbb{C})$) was first mentioned in Ref. \cite{Scott} and was used to construct a unitary-2 design relevant to process tomography of unital quantum channels. A complete set of of MUUBs for the space $M (d,\mathbb{C})$ can at most have $d^2-1$ number of bases and the same work provided a construction for $d=2, 3,5,7$ and $11$. This definition was further generalized in Ref. \cite{Jesni} to include MUUB for some subspace of $M (d,\mathbb{C})$. The definition for MUUBs is as follows, 
\noindent
\begin{definition}\label{defMUUB}
Consider some $n$ dimensional subspace, $\mathcal{S}$, of the vector space $M (d,\mathbb{C})$. Two distinct orthogonal bases of $\mathcal{S}$ composing of unitary transformations, $\mathcal{F}^{(0)}=\{f_0^{(0)},...,f_{n-1}^{(0)}\}$ and $\mathcal{F}^{(1)}=\{f_{0}^{(1)},...,f_{n-1}^{(1)}\}$, are MUUB provided that,
\begin{eqnarray}
\vert \text{Tr} (f_i^{(0)^\dagger} f_j^{(1)}) \vert ^2=C,~\forall f_i^{(0)} \in \mathcal{F}^{(0)}, f_j^{(1)} \in \mathcal{F}^{(1)}
\end{eqnarray}
for $i, j=0,...,n-1$ with $\mathcal{F}^{(0)}$ and $\mathcal{F}^{(1)} \in  \mathcal{S}$ and some constant $C \neq 0$.
\end{definition}


\noindent Beyond its role (as described in Ref. \cite{Scott}) in ancilla assisted quantum process tomography, MUUB has also found use in quantum cryptography\cite{jspla,js}. These are bidirectional quantum key distribution schemes where encoding is done by using unitary operators from differing MUUBs.

One can imagine that, akin to the construction of MUBs, the search and construction of MUUBs is a nontrivial task. While the study of MUBs has received much attention and has been understood to quite an extent (at the very least, constructions for the maximal number of MUBs for Hilbert spaces of prime powered dimensions has been established\cite{Mancini}) the study of MUUBs is very much at its infancy. We will review to a certain extent in this paper, the current understanding of MUUBs mainly based on Refs. \cite{Jesni, Rinie}. Representing new results in this work, we provide the minimal number of MUUBs for $M(d,\mathbb{C})$ and show that it approaches the maximal possible number for very large $d$. This is done by viewing the problem of construction of MUUBs as equivalent to constructing MUBs containing only maximally entangled states (MES); given the isomorphism between unitary operators and maximally entangled states. Ref. \cite{Wiesniak} shows the amount of entanglement is conserved in constructing a complete set of MUBs for bipartite states and thus some related results thereof applies directly to our case. It should be noted that the search for MUBs consisting of MES has its own interest. \cite{Tao,Liu,Dengming}  

A brief outline of the work is as follows.We begin in Sec. 2 with a straightforward numerical approach for constructing MUUB. This is mainly derived from Refs. \cite{Jesni}. Sec. 3 deals with the issue of the minimal number of MUUB that can exist. In Sec. 4, following the work of Ref. \cite{Rinie}, we consider the maximal number of MUUBs for some $d$-dimensional subspace of $M(d,\mathbb{C})$ based on an isomorphism between the monoids defined for the underlying sets $\mathcal{H}_d$ and that for the subspace of $M(d,\mathbb{C})$. 

\section{Numerical search for MUUB}
\noindent
For $d^2$ dimensional space of $M(d,\mathbb{C})$, the maximal and minimal number of MUUBs for a subspace of $M(d,\mathbb{C})$ are $d^{2}-1$ and $\max[3,d(d-1)]$ respectively\footnote{a smaller number for the minimal case was given in Ref. \cite{Jesni}}. The maximal number of MUUBs for dimensionality of $d^2$ was derived in Ref. \cite{Scott}. 

The minimal number on the other hand, can be derived using a relevant lemma in Ref. \cite{Wiesniak} describing the maximal number of mutually unbiased bases consisting of product states given a complete set of mutually unbiased bases of a bipartite systems. This will be further explained in the next subsection. 

In the absence of a proper method to construct the complete set of MUUBs, we consider a straightforward numerical approach. Consider the $d^2$-dimension space of $M(d,\mathbb{C})$ with the \textit{canonical} basis $\mathcal{K}$ given by
\begin{equation}\label{E0}
\{ X^{r} Z^{s} | r \in [0,d-1], s \in [0,d-1] \}
\end{equation}
\noindent
with $X$ and $Z$ as the generalized Pauli operators \cite{Hall}.  
\noindent
Define another basis, $\mathcal{L}$, given by
\begin{equation}\label{muub}
\mathcal{L}=\{ \mathcal{E}_{mn}= Y X^{m} Z^{n} | m \in [0,a], n \in [0,b]    \}
\end{equation}
\noindent
where $ Y \in \mathcal{L}$ is written as
\begin{equation}
Y=\sum_r \sum_s {\gamma}_{rs} X^{r} Z^{s}
\end{equation}
\noindent
for $r, s \in [0,d-1]$ and for every element $X^{r} Z^{s} \in \mathcal{K}$, ${\gamma}_{rs}\in \mathbb{C}$. It is obvious to note that the elements in $\mathcal{L}$ are orthogonal to each other given that $ |\text{Tr} [(\mathcal{E}_{mn})^{\dagger} \mathcal{E}_{m'n'}] |=|\text{Tr} [(X^{m}Z^n)^{\dagger} Y^\dagger Y X^{m'} Z^{n'} ]|=|\text{Tr}  [(X^{m} Z^{n})^{\dagger}  (X^{m'} Z^{n'})] |=0$. 
The following two conditions must then be met:
\begin{enumerate}
\item $\mathcal{E}_{mn}^{\dagger} \mathcal{E}_{mn}=\mathbb{I}_d$ (unitary condition)
\item $\forall m,n,{ r,s}, | \text{Tr}~ \mathcal{E}_{mn}^{\dagger} X^{{r}} Z^{{s}} |^2=1$ (mutually unbiased condition with the constant $C$ of Definition \ref{defMUUB} being equal to 1, i.e. the case of Ref. \cite{Scott}.)
\end{enumerate}
\noindent
From the second condition above, we note that $\forall {r,s}, |\text{Tr} (\mathcal{E}_{mn}^{\dagger} X^{{r}} Z^{{s}})|^2=|{\gamma}_{{rs}} \text{Tr} (\mathbb{I}_d)|^2$ and $|\sum_{{rs}}{\gamma}_{{rs}}\text{Tr} (\mathbb{I}_d)|^2=1$ imply that $\forall { r,s},|{ \gamma}_{{rs}}|^2=1/{d^2}$. We select $\gamma_{{rs}}=\exp(2 \pi i/d)^{g_{{rs}}}$ for some $g_{{rs}} \in \mathbb{Z}_d$. We set the algorithm for the computer to search for a $d^2-1$ element set of $\{g_{{rs}}\}$ given the conditions above. This set $\{g_{{rs}}\}$ together with Eq. (\ref{muub}) provides for a basis mutually unbiased to the standard basis. This is obvious given that, if $Y$ is mutually unbiased to all elements in the standard basis, i.e., $\forall { r,s},|{ \gamma}_{{rs}}|^2=1/{d^2}$, then 
\begin{align}
\forall a,b, ~|\text{Tr} (\mathcal{E}_{mn}^{\dagger} X^{a} Z^{b})|^2=|\text{Tr} [(X^{m}Z^n)^{\dagger}Y^\dagger X^{a} Z^{b} ]|^2\nonumber\\
=|\text{Tr} [\omega_{mn}Y^\dagger X^{a}(X^{m})^{\dagger} Z^{b}(Z^{n})^{\dagger} ]|^2\nonumber\\
=|\text{Tr} [\omega_{mn}\gamma_{rs}\mathbb{I}_d]|^2=1
\end{align}
for some $rs$ and some $|\omega_{mn}|=1$\footnote{this term arises due to the well known identity $XZ=\omega ZX$ where $\omega=\exp{(2\pi i/d)}$}. We then proceed to search for a different set of $\{g_{tu}\}$. We highlight in the following subsection, an example of the numerical search for the construction of the complete 8 MUUBs for dimension $d=3$ as in Ref. \cite{Jesni}. Note that this search to construct MUUB does not confirm the uniqueness of the construction. 

\subsection{Qutrit based example}
\noindent
We begin by letting the canonical basis for $M(3,\mathbb{C})$ to 
be the following;
\begin{equation}\label{S}
\{\mathbb{I}_3, X_3, X_3^2, Z_3, Z_3^2, X_3Z_3, (X_3Z_3)^2, X_3Z_3^2, (X_3Z_3^2)^2\}
\end{equation}
with the generalized Pauli operators of $M(3,\mathbb{C})$, i.e. $X_3$ and $Z_3$ be represented by the following matrices
\begin{equation}
X_3 =
\left(\begin{array}{ccc} 0 & 0 & 1 \\ 1 & 0 & 0 \\ 0 & 1 & 0\end{array}\right)~,~
Z_3 =
\left(\begin{array}{ccc} 1 & 0 & 0 \\ 0 & \gamma_3 & 0\\ 0 &0 & \gamma_{3}^2 \end{array}\right),
\end{equation}

\noindent
where $\gamma_3=\exp ({2 \pi i /3)}$. Note that Eq. (\ref{S}) is a specific example of Eq. (\ref{E0}) for $M(3,\mathbb{C})$. We then proceed with the following steps.\\

{\it Step 1} : Consider a unitary operator, $U_0^{(1)}$ from another basis, say $\mathcal{U}^{(1)}$ given by
\begin{align}\label{E2}
U_0^{(1)}=\frac{1}{\sqrt{9}}[\mathbb{I}_3&+\gamma_3^{t_1} X_3+ \gamma_3^{t_2} X_3^2+ \gamma_3^{t_3} Z_3+\gamma_3^{t_4} Z_3^2\nonumber\\
+&\gamma_3^{t_5} X_3 Z_3+\gamma_3^{t_6} (X_3 Z _3)^2+\gamma_3^{t_7} X_3 Z_3^2+\gamma_3^{t_8} (X_3 Z_3^2)^2]
\end{align}
The superscript (with parentheses) for the unitary operator refers to the basis, while the subscript refers to element in the basis. We now search for the set of index, $\{ t_i~|~ i=1, \cdots, 8,t_i=0,\cdots, d-1\}$ based on the conditions $(U_0^{(1)})^{\dagger} U_0^{(1)}=\mathbb{I}_3$ and $U_{0,0}^{(1)}$ is mutually unbiased to $\{X_3^{a}Z_3^{b} \}$ i.e.,
\begin{align}\label{E4}
&|\text{Tr} ((U_{0,0}^{(1)})^{\dagger} X_3)|^2=|\text{Tr} ((U_{0,0}^{(1)})^{\dagger} X_3^2)|^2=|\text{Tr} ((U_{0,0}^{(1)})^{\dagger} Z_3)|^2\nonumber\\
=&|\text{Tr} ((U_{0,0}^{(1)})^{\dagger} Z_3^2)|^2=|\text{Tr} ((U_{0,0}^{(1)})^{\dagger} X_3 Z_3)|^2|=\text{Tr} ((U_{0,0}^{(1)})^{\dagger} (X_3 Z_3)^2)|^2\nonumber\\
=&|\text{Tr} ((U_{0,0}^{(1)})^{\dagger} X_3 Z_3^2)|^2=|\text{Tr} ((U_{0,0}^{(1)})^{\dagger} (X_3 Z_3^2))|^2=1
\end{align}
\noindent
{\it Step 2}: We determine all other elements of $\mathcal{U}^{(1)}$ as
\begin{equation}\label{E3}
U_{a,b}^{(1)}=\frac{1}{\sqrt{9}}[ (U_{0,0}^{(1)} \cdot X_3^a Z_3^b]
\end{equation}
\noindent
where $a,b=0, \cdots, d^{2}-1$ (excluding the case both $a,b$ equal $0$ as that is the case of  Eq. (\ref{E4})).

We reiterate the above steps for an $l$-th operator of an $k$-th basis$, U_l^{(k)}$ to construct another basis, say $\mathcal{U}^{(k)}$ for a different set $\{t_i\}$. Beyond condition 2, a check should also be made to ensure that the new basis is mutually unbiased to any earlier ones found in case a different set $\{t_i\}$ gives some equivalent set of operators. 
The numerical search thus gives the maximal number of MUUBs (including the standard basis), eight in particular, given by 
\begin{equation}\label{E33}
\mathcal{U}^{(l)}=\{U_{0,0}^{(l)}\cdot X_3^a Z_3^b~|~\forall a,b=0,...,d-1\}
\end{equation}
with  
\begin{align}
U_{0,0}^{(1)}
=& \mathbb{I}_3+\gamma_3 X_3+ \gamma_3^2 X_3^2+ \gamma_3 Z_3+\gamma_3^2 Z_3^2+\gamma_3 X_3 Z_3+\gamma^2 (X_3 Z_3)^2+\gamma_3 X_3 Z_3^2+\gamma_3^2 (X_3 Z_3 ^2)^2\nonumber\\
U_{0,0}^{(2)}=&\mathbb{I}_3+ X_3+ \gamma_3 X_3^2+ Z_3+\gamma_3 Z_3^2+\gamma_3^2 X_3 Z_3+(X_3 Z_3)^2+\gamma_3^2 X_3 Z_3^2+ (X_3 Z_3^2)^2\nonumber\\
U_{0,0}^{(3)}=&\mathbb{I}_3+X_3+ \gamma_3 X_3^2+ Z_3+\gamma_3^2 Z_3^2+ X_3 Z_3+\gamma_3^2 (X_3 Z_3)^2+\gamma_3^2 X_3 Z_3^2+\gamma_3^2 (X_3 Z_3^2)^2\nonumber\\
U_{0,0}^{(4)}=&\mathbb{I}_3+X_3+ \gamma_3 X_3^2+  Z_3+\gamma_3^2 Z_3^2+\gamma_3 X_3 Z_3+ (X_3 Z_3)^2+\gamma_3 X_3 Z_3^2+\gamma_3 (X_3 Z_3^2)^2\nonumber\\
U_{0,0}^{(5)}=&\mathbb{I}_3+ X_3+ \gamma_3^2 X_3^2+  Z_3+\gamma_3 Z_3^2+ X_3 Z_3+\gamma_3^2 (X_3 Z_3)^2+\gamma_3 X_3 Z_3^2+(X_3 Z_3^2)^2\nonumber\\
U_{0,0}^{(6)}=&\mathbb{I}_3+X_3+ \gamma_3^2 X_3^2+  Z_3+\gamma _3 Z_3^2+\gamma_3 X_3 Z_3+ (X_3 Z_3)^2+ X_3 Z_3^2+\gamma_3^2 (X_3 Z_3^2)^2\nonumber\\
U_{0,0}^{(7)}=&\mathbb{I}_3+ X_3+ \gamma_3^2 X_3^2+ Z_3+\gamma_3^2 Z_3^2+\gamma_3^2 X_3 Z_3+\gamma_3^2 (X_3 Z_3)^2+ X_3 Z_3^2+\gamma_3 (X_3 Z_3^2)^2
\end{align}

\noindent
However, such a numerical search is anything but efficient and an algebraic construction is certainly more desirable. 

\section{MUUB and MUBs for MES}
Given the equivalence between unitary operators and MES, the search for MUUBs is equivalent to the search for MUBs consisting of MES.
For a unitary $Y_i$, its equivalent MES, $|\mathcal{Y}_i\rangle\in \mathcal{H}_d\otimes \mathcal{H}_d$, can be written as \cite{sac,d1,d2}, 
\begin{eqnarray}\label{L0}
|\mathcal{Y}_i\rangle=(\sum_{r}\sum_s \langle s|Y_i|r\rangle |r\rangle|s\rangle)/\sqrt{d}~.
\end{eqnarray}
with $|r\rangle$,$|s\rangle$ as some basis vectors for $\mathcal{H}_d$. 

A basis is mutually unbiased with respect to the above could be one with elements provided that $\mathbb{I}_d \otimes  \mathcal{Y}_s |s \rangle \rightarrow |s' \rangle$, where $\langle s | s' \rangle= 1/d$.


Then, a relation has been established between a complete set of MUBs and MES \cite{Andrei,Sych}. Ref. \cite{Wiesniak} explained that one can always find the number of MES for MUBs. Constructions of such maximally entangled basis in arbitrary bipartite system, though not achieving its maximal number, has been studied in Refs. \cite{Liu, Tao} (referred to therein as  mutually unbiased maximally entangled bases or MUMEB). Ref. \cite{Tao} constructed MUMEBs for small $d$ and $d'$ (i.e. for cases of $\mathcal{H}_2 \otimes \mathcal{H}_4$ and $\mathcal{H}_2 \otimes \mathcal{H}_6$), which depends on known Hadamard matrices in bipartite systems with small dimensions. This has its hurdles however as the structure of Hadamard matrices become increasingly more complicated with the increasing of the dimension. Later, Ref. \cite{Liu} constructed MUMEBs by using permutation matrices. To date, Ref. \cite{Dengming} has managed to acquire completely new MUMEBs in bipartite systems with arbitrary dimensions by adding Hadamard matrices which results in constructing $2(d-1)$ MUMEB in $\mathbb{C}^d \otimes \mathbb{C}^d$ for $d \geq 3$. In the following, we will present a very simple approach in constructing MUBs consisting of MES for $\mathcal{H}_d\otimes \mathcal{H}_d$ with a number approaching maximal, $d^2-1$, as the dimensionality, $d$, grows. 
\noindent
\par Given a bipartite state, Ref. \cite{Wiesniak} described that a complete set of MUBs can be formed by either
\begin{enumerate}
\item
product states and MES. 
\item
solely by partially entangled states. 
\end{enumerate}
Note that it is not possible to have a complete set of MUBs built entirely of product states nor entirely of maximally entangled states. In the following, we consider the MES consisting of subsystems $A$ and $B$. We refer to the following lemma from Ref. \cite{Wiesniak} on the number of bases containing only product states, i.e. product bases, 
\begin{lemma}\label{L1}
Let $d_A$ and $d_B$ be the dimensionality of the Hilbert spaces of subsystems $A$ and $B$ respectively. Assume that $d_A  \leq d_B$ . In a complete set of MUBs that contains $d_{A} + 1$ product MUBs, all other bases contain only maximally entangled states.
\end{lemma}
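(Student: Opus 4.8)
The plan is to exploit a \emph{conservation law} for entanglement across a complete set of MUBs and then close the argument with a counting/saturation estimate. Throughout, quantify the entanglement of a bipartite pure state $|\psi\rangle \in \mathcal{H}_{d_A}\otimes\mathcal{H}_{d_B}$ by the linear entropy $E(\psi)=1-\Tr(\rho_A^2)$, where $\rho_A=\Tr_B(|\psi\rangle\langle\psi|)$. Since $d_A\le d_B$, one has $0\le E(\psi)\le 1-1/d_A$, with $E(\psi)=0$ iff $|\psi\rangle$ is a product state and $E(\psi)=1-1/d_A$ iff $|\psi\rangle$ is maximally entangled (i.e. $\rho_A=\mathbb{I}_{d_A}/d_A$). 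The first task is to show that the total linear entropy summed over all $D(D+1)$ vectors of a complete set of MUBs, with $D=d_Ad_B$, is a fixed number independent of the choice of MUBs.

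First I would invoke the well-known fact that a complete set of MUBs in dimension $D$ is a complex projective $2$-design, so that, writing the basis vectors as $|e^{(a)}_k\rangle$ with $a=0,\dots,D$ and $k=0,\dots,D-1$,
\begin{equation}
\sum_{a,k}\bigl(|e^{(a)}_k\rangle\langle e^{(a)}_k|\bigr)^{\otimes 2}=\mathbb{I}+S,
\end{equation}
where $S$ is the swap on the two copies of $\mathcal{H}_{d_A}\otimes\mathcal{H}_{d_B}$. Combining this with the swap trick $\Tr(\rho_A^2)=\Tr\bigl[(|\psi\rangle\langle\psi|)^{\otimes 2}(S_A\otimes\mathbb{I}_{BB})\bigr]$, where $S_A$ swaps only the two $A$-copies, I would evaluate the two elementary traces $\Tr(S_A\otimes\mathbb{I}_{BB})=d_Ad_B^2$ and $\Tr[S(S_A\otimes\mathbb{I}_{BB})]=d_A^2 d_B$ to obtain $\sum_{a,k}\Tr\bigl((\rho_{A,k}^{(a)})^2\bigr)=d_Ad_B(d_A+d_B)$. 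Subtracting from $D(D+1)$ then yields the conservation law
\begin{equation}
\sum_{a,k}E(e^{(a)}_k)=d_Ad_B(d_Ad_B+1)-d_Ad_B(d_A+d_B)=d_Ad_B(d_A-1)(d_B-1).
\end{equation}

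With the conserved total in hand, the remainder is a saturation argument. The hypothesis supplies $d_A+1$ product bases, each contributing exactly $0$ to the left-hand side, so the entire amount $d_Ad_B(d_A-1)(d_B-1)$ must be carried by the vectors of the remaining $(D+1)-(d_A+1)=d_A(d_B-1)$ bases. These comprise $d_A(d_B-1)\cdot d_Ad_B=d_A^2 d_B(d_B-1)$ vectors, each with $E\le 1-1/d_A$. The crucial observation is that the largest value this family could possibly sum to is $d_A^2 d_B(d_B-1)\cdot(1-1/d_A)=d_Ad_B(d_A-1)(d_B-1)$, which coincides exactly with the conserved total. A sum of terms each bounded by $1-1/d_A$ can equal this maximum only if \emph{every} term attains $1-1/d_A$; hence every vector in every non-product basis is maximally entangled, which is the claim.

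I expect the main obstacle to be establishing the conservation law cleanly — in particular, justifying the $2$-design identity for complete MUBs and then carefully bookkeeping the swap-trick traces on the doubled space $\mathcal{H}_{d_A}^{\otimes2}\otimes\mathcal{H}_{d_B}^{\otimes2}$, keeping the $A$- and $B$-copies in the correct tensor slots. Once the conserved constant is correct, the counting is routine and the extremal step is immediate from the tightness of the bound $E\le 1-1/d_A$. A secondary point worth stating explicitly is the characterization of the extreme cases of $E$, since it is precisely what converts the numerical saturation into the structural conclusion that the states are genuinely maximally entangled.
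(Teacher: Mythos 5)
Your proof is correct. Note that the paper itself gives no proof of this lemma --- it is imported verbatim from Wie\'{s}niak, Paterek and Zeilinger (Ref.~[11]) --- and your argument is essentially the one from that source: the $2$-design property of a complete MUB set plus the swap trick gives the conserved total $\sum_{a,k}E(e^{(a)}_k)=d_Ad_B(d_A-1)(d_B-1)$ of linear entropy (this is exactly the ``conservation of entanglement'' the present paper alludes to when citing Ref.~[11]), and your saturation count, with the maximum $d_A^2d_B(d_B-1)\cdot(1-1/d_A)$ over the remaining $d_A(d_B-1)$ bases matching the conserved total exactly, correctly forces every vector outside the $d_A+1$ product bases to be maximally entangled.
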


\noindent
This is used to determine the maximal number of product MUBs as described in Ref. \cite{Wiesniak}  (note that $d_A$ and $d_B$ may not necessary be prime numbers). In our case, $d_A=d_B=d$ with $d$ being a prime number. We thus have  the following corollary,
\begin{corollary}\label{C1}
Let $d_A=d_B=d$. Then minimal number of MUBs consisting of MES is $d(d-1)$.
\end{corollary}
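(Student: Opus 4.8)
The plan is to derive the corollary as a short counting argument resting on Lemma~\ref{L1} and on the prime-power structure of the composite dimension. First I would note that, since $d$ is prime, the full space $\mathcal{H}_d\otimes\mathcal{H}_d\cong\mathcal{H}_{d^2}$ has prime-power dimension $d^2$, so a complete set of MUBs exists and comprises exactly $d^2+1$ bases. By the dichotomy recalled immediately before Lemma~\ref{L1} (item~1 of the enumeration, together with the remark that no complete set is built entirely of product or entirely of maximally entangled bases), a complete set of the product-plus-MES type splits its $d^2+1$ bases into $p$ product bases and $d^2+1-p$ bases of maximally entangled states.

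Next I would bound $p$ using Lemma~\ref{L1} with $d_A=d_B=d$: as soon as a complete set contains $d_A+1=d+1$ product MUBs, every remaining basis must be maximally entangled. Consequently no complete set can carry more than $d+1$ product bases, for a hypothetical $(d+2)$-th product basis would violate this conclusion. Hence $p\le d+1$, and the number of MES bases is at least $(d^2+1)-(d+1)=d^2-d=d(d-1)$, which establishes the lower bound.

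It then remains to show the bound is attained, so that $d(d-1)$ is truly the minimum. Taking a complete set $\{B_0,\dots,B_d\}$ of $d+1$ MUBs for a single factor $\mathcal{H}_d$, the tensor products $B_k\otimes B_k$ form $d+1$ mutually unbiased product bases of $\mathcal{H}_d\otimes\mathcal{H}_d$, since $|\langle u\otimes u'\mid v\otimes v'\rangle|^2=|\langle u\mid v\rangle|^2\,|\langle u'\mid v'\rangle|^2=1/d^2$ for vectors drawn from distinct factor bases. Applying Lemma~\ref{L1} to a completion of these $d+1$ product bases forces the remaining $d(d-1)$ bases to be MES, saturating the bound. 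The step I expect to be the main obstacle is not the counting, which is routine, but this achievability claim: one must ensure the $d+1$ factor-induced product bases genuinely extend to a full set of $d^2+1$ MUBs, a point that relies on the prime-power MUB construction and on Lemma~\ref{L1} itself to supply the complementary maximally entangled bases.
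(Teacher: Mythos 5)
Your proof is correct and its core is exactly the paper's argument: Lemma~\ref{L1} caps the number of product bases in a complete set at $d+1$, so the MES bases number at least $(d^2+1)-(d+1)=d(d-1)$. The additional achievability step you include (tensoring single-factor MUBs and completing them) is not in the paper's proof, which simply defers the construction to Wie\'{s}niak et al.; you correctly flag that the existence of such a completion is the nontrivial point, and it is indeed supplied by that reference rather than by Lemma~\ref{L1}.
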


\begin{proof}
The proof is rather straightforward. From Lemma \ref{L1}, the maximal number of product bases would be $d+1$. The minimal number of MUBs for MES thus will be
\begin{align}\label{L2}
(d^{2}+1)-(d+1)
=d(d-1)
\end{align} 
\end{proof}

\noindent We note that this will not give the maximal number of MUBs for MES which is $d^{2}-1$ (assuming it exists). However the maximal number is approached for very large values of $d$. Let $R$ be the ratio of the maximal number of MUBs for MES to that of Eq. (\ref{L2})
\begin{eqnarray}\label{L3}
R=\frac{d^{2}-1}{d(d-1)}=\frac{(d+1)(d-1)}{d(d-1)}=\frac{d+1}{d}
\end{eqnarray}

\begin{figure}[h!]
\begin{center}
\includegraphics[angle=0,width=5.0in]{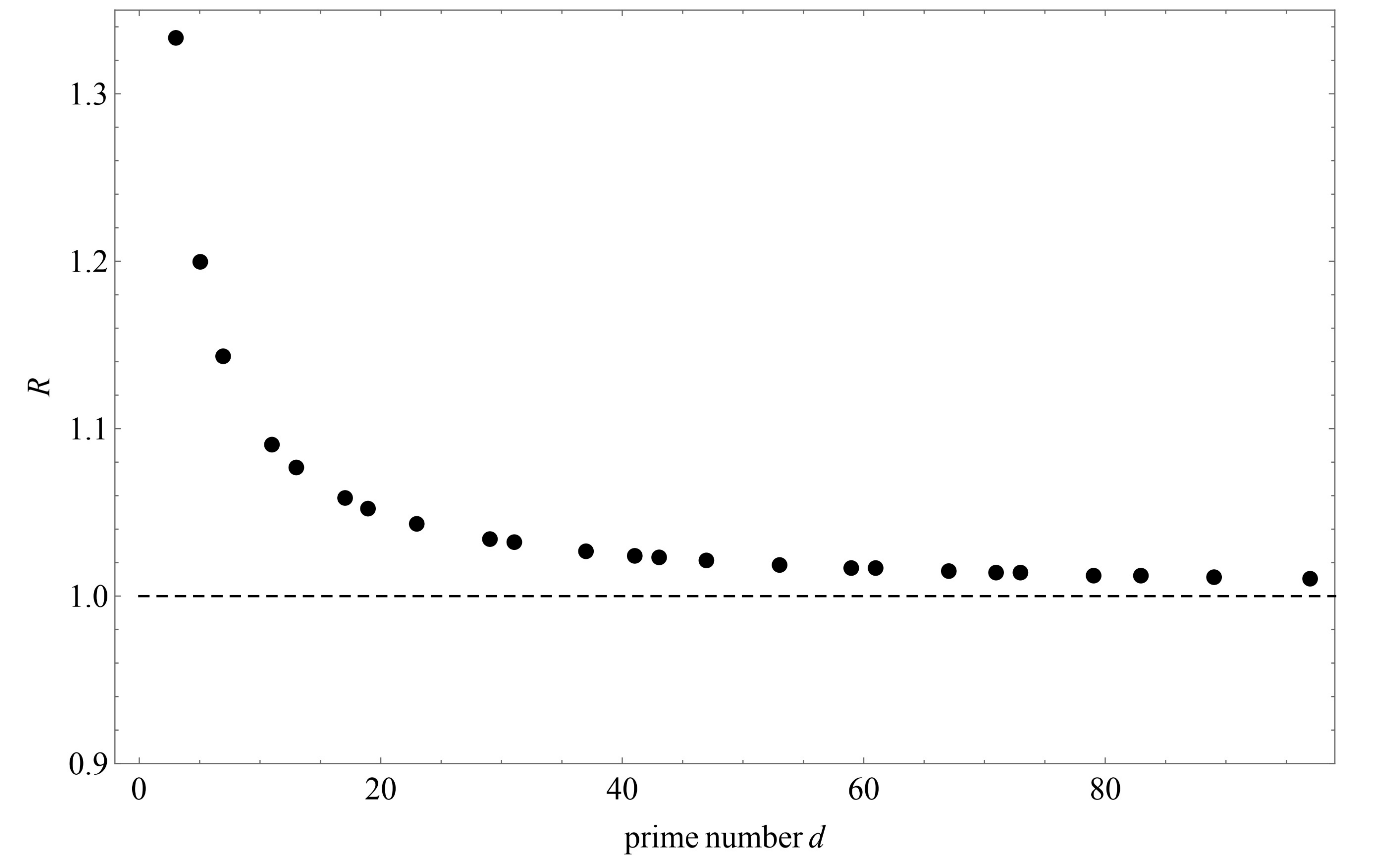}
\end{center}
\vspace{-.5cm}{\textbf{\caption{Quantity $R$ against the prime number $d$ }\label{ratio}}}
\addtocontents{lof}{\protect\addvspace{.5cm}}
\end{figure}
\noindent Figure \ref{ratio} depicts the quantity $R$ against the first 24 prime numbers, $d$ (excluding $d=2$). We can observe that $R$ approaches unity (horizontal dashed line) for very large values of $d$; i.e. $\lim_{d\rightarrow \infty}{R}=\lim_{d\rightarrow \infty}{(d+1)/d}=1$. 

Note that the case $d=2$, Ref. \cite{Jesni} gave the number of MUBs as three. Therefore, coupled with corollary \ref{C1},  we can conclude that for $\mathcal{H}_d \otimes \mathcal{H}_d$ with $d$ being a prime number, one can always construct at least 3 MUBs consisting of MES (or equivalently, 3 MUUBS for the $d^2$-dimensional space of $M(d,\mathbb{C})$). We should stress that, Eq. (\ref{L2}) gives us the minimum number of such MUBs, and exceeds that of  Ref. \cite{Dengming} which gives only $2(d-1)$, namely by a factor of $d/2$. We refer to the details of constructing such MUBs in Ref. \cite{Wiesniak}. The construction itself is somewhat interesting given its simplicity, i.e. it makes use of some entangling (control phase) operator acting on product bases. However, despite the guaranteed existence of such operators, the precise operator to be used for a given $d$ is not dimensional independent nor is it a known function of $d$.

\section{MUUB for $d$-dimensional subspace of $M(d,\mathbb{C})$}
\noindent
Consider a $d$-dimensional subspace of $M(d,\mathbb{C})$, spanned by the basis
\begin{eqnarray}\label{L33}
\{Z^0, Z^1,...,Z^{d-1}\}
\end{eqnarray}
with $Z^0=\mathbb{I}_d$, the identity operator and $Z^i Z^j=Z^{i \oplus j}$. These operators may correspond to the generalised Pauli matrices. In what follows, we shall refer to such a subspace as $\mathcal{K}_{\mathcal{S}}$. A unique isomorphism, $G$, defined by,
\begin{eqnarray}
G(\sum_im_i|i\rangle)=\sum_im_i Z^i~,~\forall a_i\in \mathbb{C}, i=0,...,d-1.
\end{eqnarray}
would map MUBs for $\mathcal{H}_d$ to bases for $\mathcal{K}_{\mathcal{S}}$. Though the absolute value of the inner product on $\mathcal{H}_d$, for any two states in $\mathcal{H}_d$, $\vert \langle \psi \vert \Phi \rangle \vert$, is proportional to the absolute value of the inner product on $\mathcal{K}_{\mathcal{S}}$, this ensures that the bases concerned for $M(d,\mathbb{C})$ would be mutually unbiased to one another. However, this provides no guarantee that the elements in each MUB for $\mathcal{H}_d$ is mapped to unitary operators. As the unitarity of operators is defined based on multiplication between operators, a binary operation on the underlying set of $\mathcal{H}_d$ can be defined accordingly. The resulting monoid is then shown to be isomorphic to the monoid for $\mathcal{K}_{\mathcal{S}}$.
Define the binary operation, $\bullet$, on the set $\mathcal{H}_d$ as follows. 
\begin{definition}
\noindent
For all $\vert \phi \rangle,\vert \psi \rangle \in \mathcal{H}_d $ given as 
\begin{eqnarray}
\vert \phi \rangle=\sum_{i} m_i \vert i \rangle
~,~\vert \psi \rangle=\sum_{j} n_j \vert j \rangle
\end{eqnarray}
with $m_i,n_j\in\mathbb{C}$ and  $i,j=0,...,d-1$, we let $\bullet:\mathcal{H}_d\times \mathcal{H}_d\rightarrow \mathcal{H}_d$ be defined as
\begin{equation}
\sum_{i} m_i \vert i \rangle \bullet \sum_{j} n_j \vert j \rangle = \sum_{q} \alpha_q \vert  q \rangle
\end{equation}
{ where $\alpha_q=\sum_{i,j} m_i n_j$ and $q=i\oplus j$}.
\end{definition}
\noindent It can be shown that $\mathcal{H}_d$ with the operation, $\bullet$, defines a monoid and is isomorphic to the monoid of the subspace of $\mathbb{C}$ defined with multplication\cite{Rinie}. Consider next the following definition. 
\begin{definition}\label{defC}
The conjugate transpose of a state $\vert \phi \rangle=\sum_{i} m_i \vert i \rangle\in \mathcal{H}_d$ is given as $\vert \phi \rangle^\dagger=\sum_{i} m^*_i \vert d-i \rangle$.
\end{definition}
\noindent
This is different from the conventional definition in terms of the dual of a ket, i.e. $|\phi \rangle^\dag = \langle\phi |$. Our definition of such a state is in fact motivated by the conjugate transpose of the operator, say, $K=\sum m_i Z^i$. In other words, if $G^{-1}(K)=\sum_{i} m_i \vert i \rangle=\vert \phi \rangle$, then $G^{-1}(K^\dagger)=\sum_{i} m^*_i \vert d-i \rangle=\vert \phi \rangle^\dagger$. This definition is useful in determining which states in $\mathcal{H}_d$ is mapped to uniatries in the subspace of $\mathbb{C}$. More precisely, $\forall |\phi\rangle\in\mathcal{H}_d,G(|\phi\rangle)$ is a unitary operator if and only if $|\phi\rangle\bullet|\phi\rangle^\dagger=|0\rangle$.


\subsection{Maximal number of MUUB for $\mathcal{K}_{\mathcal{S}}$ }

Before describing the maximal number of MUUBs that can be constructed, we take a quick detour again to the equivalent search for MUBs consisting of MES. Considering only the $d$-dimensional subspace of $\mathcal{H}_d\otimes\mathcal{H}_d$, Ref. \cite{Rinie} subscribes to the same notion of deriving the maximal number of MUBs for a $d$-dimensional space. One needs to consider the space of $d^2-1$ traceless Hermitian operators and MUBs then span subspaces orthogonal to one another. However in considering only MES, the dimensionality of the space of traceless Hermitian operators concerned was shown to be lesser than $d^2-1$, eventually resulting in the maximal number of such MUBs to be lesser than $d+1$. The following theorem then, shows how we can construct $d$ MUUBs and therefore provides the maximal number possible.

\begin{theorem}
The maximal number of MUUBs for the $d$ dimensional subspace $\mathcal{K}_{\mathcal{S}}$ is $d$.
\end{theorem}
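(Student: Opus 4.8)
The plan is to establish both directions: first an upper bound showing at most $d$ MUUBs can exist for $\mathcal{K}_{\mathcal{S}}$, and then an explicit construction achieving $d$. For the upper bound, I would exploit the isomorphism $G$ together with the monoid structure developed above. Each MUUB of $\mathcal{K}_{\mathcal{S}}$ pulls back under $G^{-1}$ to an orthonormal basis of $\mathcal{H}_d$ whose elements all map to unitaries, i.e. each basis vector $|\phi\rangle$ satisfies the characterization $|\phi\rangle\bullet|\phi\rangle^\dagger=|0\rangle$ from Definition \ref{defC}. Writing $|\phi\rangle=\sum_i m_i|i\rangle$, the unitarity condition forces $|m_i|=1/\sqrt{d}$ for all $i$ (the coefficients are equal in modulus, analogous to the argument in Sec. 2 that $|\gamma_{rs}|^2=1/d^2$). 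Hence the admissible states are exactly those with flat amplitude profiles, parametrized by phases. The MUUB condition then reduces to a mutual-unbiasedness condition on these phase vectors, which is precisely the problem of finding MUBs for $\mathcal{H}_d$ restricted to flat-amplitude bases.

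The key structural point, following the remark preceding the theorem, is that restricting to unitaries (equivalently, MES in the bipartite picture) shrinks the relevant space of traceless Hermitian operators below $d^2-1$. I would make this precise: the set of flat-amplitude states, after removing the global phase and the overall normalization, lives effectively in a space of dimension $d-1$ rather than the full $2(d-1)$ real dimensions available to an arbitrary state in $\mathcal{H}_d$. Consequently the subspaces spanned by distinct MUUBs (in the traceless-Hermitian representation) must be mutually orthogonal inside a space whose dimension supports at most $d$ such bases. Counting dimensions as in the standard MUB upper-bound argument then yields that no more than $d$ mutually unbiased flat-amplitude bases can coexist, giving the bound $d$.

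For the achievability half, I would produce an explicit family of $d$ such bases. The natural candidate is to take the canonical basis $\{Z^0,\dots,Z^{d-1}\}$ for $\mathcal{K}_{\mathcal{S}}$ and act with the generalized Pauli/Fourier structure: bases of the form $\{X^k Z^j : j=0,\dots,d-1\}$ restricted appropriately, or equivalently phase vectors $m_i^{(k)}=\gamma^{k\,g(i)}/\sqrt{d}$ built from a quadratic Gauss-sum exponent $g(i)$, which for prime $d$ is the classical recipe yielding mutual unbiasedness. I would verify that each such basis consists of unitaries (the flat-amplitude condition holds by construction) and that any two satisfy $|\mathrm{Tr}(f_i^{(k)\dagger}f_j^{(k')})|^2=C$ via the standard Gauss-sum evaluation, giving exactly $d$ pairwise MUUBs.

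The main obstacle I expect is the upper bound rather than the construction. Pinning down the exact dimension of the space spanned by the flat-amplitude unitaries in the traceless-Hermitian picture — and showing it caps the count at $d$ and not, say, $d+1$ — is the delicate step, since one must carefully separate the constraints imposed by unitarity (fixing the amplitude moduli) from the freedom remaining in the phases. The achievability construction, by contrast, should follow routinely from the prime-dimension Gauss-sum machinery once the correct phase ansatz is identified.
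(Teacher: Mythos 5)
Your proposal has a genuine gap at its core: you conflate two different conditions. Unitarity of $G(|\phi\rangle)=\sum_i m_i Z^i$ does \emph{not} force $|m_i|=1/\sqrt{d}$ --- the operator $Z^j$ itself is unitary and pulls back to a computational basis state, which is maximally non-flat. Flatness of the amplitudes is the condition for unbiasedness \emph{with respect to the canonical basis} $\{Z^i\}$, not for unitarity. Conversely, and more damagingly for your achievability step, flat amplitudes do not imply unitarity: the state $\frac{1}{\sqrt d}\sum_i |i\rangle$ maps to $\frac{1}{\sqrt d}\sum_i Z^i$, which is $\sqrt{d}$ times a rank-one projector and hence not unitary. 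This is precisely the point on which the paper's proof turns. There, one takes the standard $d+1$ MUBs of $\mathcal{H}_d$ and tests each against the criterion $|\phi\rangle\bullet|\phi\rangle^\dagger=|0\rangle$: the Fourier-type basis ($k=0$, linear phases only) fails --- one computes $|\gamma_0^{(0)}\rangle\bullet|\gamma_0^{(0)}\rangle^\dagger=\sum_q|q\rangle\neq|0\rangle$ --- while the $d-1$ bases carrying a genuinely quadratic phase pass, by a Gauss-sum evaluation of the coefficient of each $|q\rangle$. Together with the computational basis (which maps to $\{Z^i\}$ itself) this yields exactly $d$ MUUBs. Your ansatz, by asserting that unitarity ``holds by construction'' from flatness, would wrongly admit the $k=0$ basis as well and conclude $d+1$; the whole content of the theorem is that one of the $d+1$ candidate bases must be discarded.

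Your upper-bound argument is also not carried through. The claim that the flat-amplitude states live in a $(d-1)$-dimensional space does not by itself cap the number of mutually unbiased bases at $d$ via the standard traceless-Hermitian counting (each basis contributes a $(d-1)$-dimensional orthogonal subspace, so you would need the ambient space to have dimension at most $d(d-1)$, which you never establish); and since the premise feeding into it (unitarity $\Rightarrow$ flatness) is false, the restriction you are counting is not the right one anyway. The paper does not reprove this bound: it invokes the result of Ref.~\cite{Rinie} that the relevant space of traceless Hermitian operators has dimension strictly less than $d^2-1$, so that fewer than $d+1$ such bases can coexist, and then exhibits $d$ of them. If you want a self-contained upper bound you must either reproduce that dimension count correctly or show directly that no $(d+1)$-st unitary basis can be mutually unbiased to the $d$ constructed ones.
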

\begin{proof}

\noindent
The isomorphism between the vector space $\mathcal{H}_d$ and $\mathcal{K}_{\mathcal{S}}$ maps between the bases of the vector spaces; The first part of the proof however establishes the fact at least one of the $d+1$ MUBs for $\mathcal{H}_d$ would be mapped to a basis of $\mathcal{K}_{\mathcal{S}}$ which would contain a non-unitary operator.

Beginning with the the computational basis, $\{|0\rangle, ..., |d-1\rangle\}$ for $\mathcal{H}_d$, the $t$-th state from any of the remaining $d$ MUBs can thus be written as 
\begin{eqnarray}\label{ket}
\vert  \gamma_{t}^{(k)} \rangle=\frac{1}{\sqrt{d}} \sum_{h=0}^{d-1} ( \gamma^t)^{d-h} ( \gamma^{-k})^{\alpha_h} \vert h \rangle
\end{eqnarray}
where $\gamma$ is $d$th root of unity, $\gamma=\exp ({2 \pi i /d)}$, $t=0,...,d-1$, $k=0,...,d-1$ (the index $k$ and $t$ indicating the $k$-th basis and element of basis respectively) and $\alpha_h =h+..+(d-1)$ as in Ref. \cite{Durt}. It is instructive to note that $\gamma^i \gamma^j=\gamma^{i \oplus j}$ with $\oplus$ as addition modulo $d$.

Then, consider its conjugate transpose according to definition \ref{defC}, $\vert { \gamma}_{t}^{(k)} \rangle^\dagger$, as 
\begin{eqnarray}
\vert { \gamma}_{t}^{(k)} \rangle^\dagger=\frac{1}{\sqrt{d}} \sum_{h=0}^{d-1} [({ \gamma}^t)^{d-h} ({ \gamma}^{-k})^{\alpha_{(h)}}]^\ast \vert d-h \rangle.
\end{eqnarray}
Thus to determine if $\vert { \gamma}_{t}^{(k)}\rangle$ would be mapped to a unitary operator, we consider the following,
\begin{align}
\vert { \gamma}_{t}^{(k)} \rangle \bullet \vert { \gamma}_{t}^{(k)}\rangle^\dagger
=& \frac{1}{d} \sum_{q=0}^{d-1}\left [ \sum_{h=0}^{d-1} ({ \gamma}^t)^{d-h} (({ \gamma}^t)^{b_h^{(q)}})^\ast \cdot ({ \gamma}^{-k})^{\alpha_{(h)}} (({ \gamma}^{-k})^{\alpha_{(d-b_h^{(q)})}})^\ast\right] \vert q \rangle\nonumber\\
=&\frac{1}{d} \sum_{q=0}^{d-1} \left[ \sum_{h=0}^{d-1} ({ \gamma}^{t(d-h-b_h^{(q)})}) \cdot({\gamma}^{\frac{1}{2}k (d-b_h^{(q)}-(d-b_h)^2+(h-1)h)}) \right]  \vert q \rangle
\end{align}
with the integer $b_h^{(q)}\in[0,d-1]$ such that $h\oplus b_h^{(q)}=q$. We made use of the fact, $\alpha_{(d-b_h^{(q)})}-\alpha_{(h)}= [(d- b_h^{(q)})-(d- b_h^{(q)})^2+(h-1)h]/2$. { As $q=h\oplus b_h^{(q)}$, thus $b_h^{(q)}=q+kd-h$ for some integer $k\ge 0$. As $h$ and $b_h^{(q)}$ are both lesser than $d-1$, thus $k$ cannot be greater than 1.}
With $\gamma^d=\gamma^{b_h^{(q)}d}=1$ and $\gamma^{d/2}=\gamma^{d^2/2}=-1$,  we can further simplify the above into
{\begin{eqnarray}\label{ww}
\vert { \gamma}_{t}^{(k)} \rangle \bullet \vert {\gamma}_{t}^{(k)}\rangle^\dagger
= \frac{1}{d} \sum_{q=0}^{d-1}\left[ \sum_{h=0}^{d-1}{ \gamma}^{hkq}\cdot { \gamma}^{-\frac{k}{2}(q^2+q)-qt } \right] \vert q \rangle.
\end{eqnarray}
Consider the coefficient for the ket $|q\rangle$ is given by 
\begin{eqnarray}\label{cw}
\left[\dfrac{ \gamma^{-\frac{k}{2}(q^2+q)-qt }}{d}\right ]\sum_{h=0}^{d-1}{\gamma}^{hkq}~,
\end{eqnarray}
in} the case of $k=t=0$, we have, the coefficients for every ket (irrespective of $q$) being 1. This implies 
\begin{eqnarray}
\vert { \gamma}_{0}^{(0)} \rangle \bullet \vert { \gamma}_{0}^{(0)} \rangle^\dagger=\sum_{q=0}^{d-1}\vert q \rangle\neq |0\rangle~.
\end{eqnarray}
Thus, the element $\vert { \gamma}_{0}^{(0)} \rangle$ cannot be mapped to a unitary transformation under $G$, hence at least 1 MUB of $\mathcal{H}_d$ cannot be mapped to a unitary basis for $\mathcal{K}_{\mathcal{S}}$.\\
\newline
\noindent The second part of the proof addresses the case for the MUBs with $k \neq 0$; where it is shown that only in the case for $q=0$, the coefficient becomes $1$ (irrespective of $k$) and zero otherwise. The former is straightforward and can be seen by setting $q=0$.
One can rewrite the index of $\gamma$ in the summation of Eq. (\ref{cw}), $hkq$ simply as $hp$ with $p=kq$. It is obvious that $p$ is an integer. Writing $l=hp\mod{d}$, for every integer value of $h\in[0,d-1]$, $l$ will take on a unique integer value in $[0,d-1]$. Thus, for $q \neq 0$,
\begin{equation}
\sum_{h=0}^{d-1} {\gamma}^{(hp)}=\sum_{l=0}^{d-1} { \gamma}^l =0~.
\end{equation}
Thus, $\vert { \gamma}_{t}^{(k)} \rangle \bullet \vert { \gamma}_{t}^{(k)} \rangle^\dagger
=|0\rangle$ for $r\neq 0$. Given the maximal number of MUUBs to be lesser than $d+$, one thus concludes that only $d$ number of MUBs of $\mathcal{H}_d $ are mapped to MUUBs of $\mathcal{K}_{\mathcal{S}}$. 
\end{proof}

\subsection{Recipe for constructing MUUBs for $\mathcal{K}_{\mathcal{S}}$}
\noindent The proof in the previous section is a constructive one, i.e. it provides a way to construct the maximal number of MUUBs for the subspace, $\mathcal{K}_{\mathcal{S}}$. Writing a basis for the subspace as $Z^{(0)}=\{\mathbb{I}_d,Z,Z^2,...,Z^{d-1}\}$, the $t$-th operator of the $k$-th basis, $Z_t^{(k)}$, which is mutually unbiased to $Z^{(0)}$ is given by
\begin{eqnarray}
Z_t^{(k)}=\dfrac{1}{\sqrt{d}} \sum_{i=0}^{d-1} ({ \gamma}^t)^{d-i} ({ \gamma}^{-k})^{\alpha_{(h)}} Z^i 
\end{eqnarray}
with $k=1,...d-1, t=0,...,d-1,$ and $\alpha_{(h)}=h+...+d-1$. { The bases constructed thus are also pairwise mutually unbiased.}
{ We provide an explicit example for the 3-dimensional subspace of $M(3,\mathbb{C})$. Beginning with say a basis, $\mathcal{Z}^{(0)}=\{ \mathbb{I}_3, \mathbb{Z}_3, \mathbb{Z}_3^2\}$ where $\mathbb{Z}_3$ is the generalised Pauli matrix for 3 dimensions, the other two sets of MUUBs, $\{ \mathcal{Z}_0^{(1)}, \mathcal{Z}_1^{(1)}, \mathcal{Z}_2^{(1)} \}$ and $\{ \mathcal{Z}_0^{(2)}, \mathcal{Z}_1^{(2)}, \mathcal{Z}_2^{(2)} \}$ are described by the operators,}
{ \begin{align}
\mathcal{Z}_m^{(1)} &=\frac{1}{\sqrt{3}} [ \mathbb{I}_3 + \gamma^{2m} \mathbb{Z}_3 +\gamma^{m+1} \mathbb{Z}_3^2 ]
\nonumber\\
\mathcal{Z}_n^{(2)} &=\frac{1}{\sqrt{3}} [ \mathbb{I}_3 +\gamma^{2n} \mathbb{Z} _3+\gamma^{n+2} \mathbb{Z}_3^2 ]
\end{align} \newline
\noindent The above gives a total of 3 MUUBs; we note that this corresponds to the result of the numerical search in Ref.\cite{Jesni}.
}

\section{Conclusion}
\noindent
The study of MUUB for $M(d,\mathbb{C})$ and MUBs for $\mathcal{H}_d\otimes\mathcal{H}_d$ consisting of only MES are equivalent; conclusions drawn from one can be used for the other, at least in the context of its construction. Phenomenologically however, they are very different  as one addresses unitary operators acting on quantum states while the other are entangled states. 

As the minimal number of MUUBs that can be constructed is $d(d-1)$ and the case for $d=2$ is known from Ref. \cite{Jesni} to be 3, therefore we can safely conclude that one can construct at least 3 MUUBs for the $d^2$-dimensional space of $M(d,\mathbb{C})$ for any prime number $d$. To date however, no known recipe exists for constructing the maximal number $d^2-1$. Nevertheless, referring to the equivalent problem of MUB for MES, any construction for the $d^2+1$ maximal number of MUBs for a $\mathcal{H}_d\otimes\mathcal{H}_d$, one can consider bases  exclusively consisting either of product states or MES. The minimal number one can achieve in this way for MUBs consisting of MES (equivalently MUUBs) would approach the maximal number, for very large values of $d$.  

In terms of the $d$-dimensional subspace, $\mathcal{K}_{\mathcal{S}}$, i.e. spanned by a basis as that of Eq. (\ref{L33}), the maximal number of MUUB is $d$. A recipe for the construction of such bases is based on an isomorphism between the monoids defined for $\mathcal{H}_d$ and that defined for $\mathcal{K}_{\mathcal{S}}$.

MUUBs suggests an analogous structure for unitary operators as MUBs are for quantum states. With much understanding of it is wanting, immediate directions of study should address the issues of constructing the maximal number of MUUBs possible for $M(d,\mathbb{C})$, equivalences of possible families of MUUBs or even MUUBs for non-prime $d$. In the context of application, MUUBs have played an important role in quantum process tomography and quantum cryptography. In the latter for example, the choice of encoding from differing MUUBs was shown\cite{js} to naturally suppress an eavesdropper information gain. It has also recently been shown to be related to maximal entropic bounds for pairs of setups in distinguishing between unitary processes not unlike the case of MUBs in entropic bounds for observables\cite{ent}. A more thorough study should provide not only further development of the field in its fundamental context, but also its practical implications.

\section{Acknowledgement}

One of the authors, J. S. S. would like to acknowledge financial support under the project FRGS19-141-0750 from the Ministry of Higher Education's Fundamental Research Grant Scheme and the University's Research Management Centre (RMC) for their support and facilities provided.

\end{document}